\newtheorem{lemma}{Lemma}
\newcommand{\renyi}{R$\mathrm{\acute{e}}$nyi }
\newcommand{\mc}{\mathcal}
\begin{document}

\setstcolor{red}   

\title{One-Shot Coherence Dilution}
\author{Qi Zhao}
\author{Yunchao Liu}
\author{Xiao Yuan}
\email{yxbdwl@gmail.com}
\affiliation{Center for Quantum Information, Institute for Interdisciplinary Information Sciences, Tsinghua University, Beijing, 100084 China}
\author{Eric Chitambar}
\email{echitamb@siu.edu}
\affiliation{Department of Physics and Astronomy, Southern Illinois University, Carbondale, Illinois 62901, USA}
\author{Xiongfeng Ma}
\email{xma@tsinghua.edu.cn}
\affiliation{Center for Quantum Information, Institute for Interdisciplinary Information Sciences, Tsinghua University, Beijing, 100084 China}

\begin{abstract}
Manipulation and quantification of quantum resources are fundamental problems in quantum physics. In the asymptotic limit, coherence distillation and dilution have been proposed by manipulating infinite identical copies of states.  In the nonasymptotic setting, finite data-size effects emerge, and the practically relevant problem of coherence manipulation using finite resources has been left open.  This Letter establishes the one-shot theory of coherence dilution, which involves converting maximally coherent states into an arbitrary quantum state using maximally incoherent operations, dephasing-covariant incoherent operations, incoherent operations, or strictly incoherent operations.
We introduce several coherence monotones with concrete operational interpretations that estimate the one-shot coherence cost---the minimum amount of maximally coherent states needed for faithful coherence dilution.
Furthermore, we derive the asymptotic coherence dilution results with maximally incoherent operations, incoherent operations, and strictly incoherent operations as special cases.
Our result can be applied in the analyses of quantum information processing tasks that exploit coherence as resources, such as quantum key distribution and random number generation.
\end{abstract}

\maketitle

Quantum coherence is a fundamental property that can emerge within any quantum system.  With respect to some physically preferable reference frame \cite{Aharonov67,Kitaev04,Bartlett07}, such as the energy levels of an atom or a selected measurement basis, coherence empowers the ability of many quantum information tasks, including cryptography \cite{coles2016numerical}, metrology \cite{giovannetti2011advances}, and randomness generation \cite{Yuan15intrinsic, ma2017source}. Furthermore, coherence is a widespread resource playing important roles in biological systems \cite{plenio2008dephasing, rebentrost2009role} and small-scale thermodynamics \cite{Aberg14,Horodecki15,Lostaglio15,lostaglio2015description,narasimhachar2015low}.

Various efforts have been devoted to building a resource framework of coherence \cite{aberg2006quantifying, Baumgratz14, streltsov2016quantum}. In general, a resource theory is defined by a set of free states and a corresponding set of free operations that preserve the free states.  States that are not free are said to possess resource, and various measures can be constructed to quantify the amount of resource in a given state.  For example, in the resource theory of entanglement \cite{Bennett96, plenio2005introduction,Horodecki09}, free states and free operations are defined by separable states, local operation and classical communication (LOCC), respectively. Entanglement measures include the relative entropy of entanglement \cite{Vedral97} and entanglement of formation \cite{Bennett96}.

In the resource theory of coherence \cite{aberg2006quantifying, Baumgratz14}, free or incoherent states are those that are diagonal in \textit{a priori} fixed computational basis; free or incoherent operations are some specified classes of physically realizable operations that act invariantly on the set of incoherent states.  Different definitions of incoherent operations have been studied due to different motivations. In this work, we focus on the maximally incoherent operation (MIO) proposed by \AA berg \cite{aberg2006quantifying}, the dephasing-covariant incoherent operation (DIO) proposed independently by Chitambar and Gour \cite{Chitambar16prl} and Marvian and Spekkens \cite{Marvian16}, the incoherent operation (IO) proposed by Baumgratz \textit{et al.} \cite{Baumgratz14}, and the strictly incoherent operation (SIO) proposed by Winter and Yang \cite{Winter16}.
Coherence measures include the relative entropy of coherence \cite{Baumgratz14}, coherence of formation \cite{aberg2006quantifying, Yuan15intrinsic}, robustness of coherence \cite{Napoli16}, etc.
We refer to Ref.~\cite{streltsov2016quantum} for a comprehensive review of recent developments of the resource theory of coherence.

Investigating state transformations via free operations is of paramount importance in a resource theory. In particular, many efforts have been devoted to understand the interconversion between a given state $\rho$ and copies of a canonical unit resource $\ket{\Psi}$ \footnote{The unit resource $\ket{\Psi}$ is an EPR pair or a maximally coherent state in the resource theories of entanglement and coherence, respectively.} via free operations. Specifically, the dilution problem is to convert unit resource $\ket{\Psi}$ to the target state $\rho$, and the distillation problem is the reverse process. In the asymptotic case, where infinite copies of $\rho$ and $\ket{\Psi}$ are provided, the dilution rate (or coherence cost) and the distillable rate describe the maximal proportion of $\rho$ and  $\ket{\Psi}$ that can be obtained on average, respectively. In entanglement theory, the well-known distillable entanglement \cite{rains2001semidefinite} and entanglement cost \cite{hayden2001asymptotic} of a state measure its optimal rate of asymptotic distillation and dilution, respectively.  The asymptotic distillation and dilution of coherence under IO and SIO have been investigated by Winter and Yang \cite{Winter16} who proved that the distillable coherence is given by the relative entropy of coherence and that the coherence cost is given by the coherence of formation.

The processes of asymptotic distillation and dilution are studied under two crucial assumptions: (i) a source is available that prepares independent and identically distributed (IID) copies of the same state and (ii) an unbounded number of copies of this states can be generated.  These assumptions overlook possible correlations between different state preparations and they become unreasonable when only a finite supply of states are available.  In order to relax the two assumptions, it is necessary to consider the most general scenario, i.e., the one-shot scenario, where the conversion is from a general initial state to a general final state.  Such a scenario reflects realistic experimental setups where we only manipulate finite and correlated states. In many quantum information tasks, such as quantum key distribution \cite{tomamichel2012framework}, device independent processing \cite{Vazirani3432, Miller14}, thermodynamics \cite{aaberg2013truly, skrzypczyk2014work,horodecki2013fundamental,Aberg14,brandao2015second,Horodecki15,Lostaglio15,lostaglio2015description,narasimhachar2015low}, quantum channel capacity \cite{Renner06, mosonyi2009generalized, Buscemi10, Wang12}, and general resource theory \cite{gour2016quantum}, some analyses have already been conducted in the one-shot scenario. In particular, one-shot entanglement distillation and dilution have been investigated under LOCC \cite{Datta09, buscemi2010distilling, Buscemi11}, as well as nonentangling maps and operations that generate negligible amount of entanglement \cite{Brandao11}.  In thermodynamics, conversion under thermal operations is known only for qubit states \cite{Horodecki15}.  For coherence, the necessary and sufficient conditions for single-copy state transformations are known only for pure states \cite{Du15, Winter16, Zhu-16a} and single qubit mixed states \cite{Chitambar16prl,Chitambar16pra}.
Generally, one-shot coherence distillation and dilution of general quantum states have been left as open problems \cite{Winter16, streltsov2016quantum}.

In this Letter, we consider one-shot coherence dilution under four widely accepted incoherent operations: MIO, DIO, IO, and SIO. We first review the coherence framework by \AA berg \cite{aberg2006quantifying} and Baumgratz et al.~\cite{Baumgratz14}. Then, we introduce several coherence monotones for different incoherent operations. In addition, we define the one-shot coherence cost in the dilution process, and explicitly show that the optimal one-shot coherence cost is characterized by the introduced coherence monotones. Moreover, when applying our results to the asymptotic IID scenario, we obtain the coherence cost under MIO and show that it is equal to the relative entropy of coherence.
Similarly, we also derive the asymptotic coherence cost under IO and SIO and show that it equals to the coherence of formation, which is consistent with the results in \cite{Winter16}. Our main results are summarized in Table \ref{Table:results}.  We introduce and discuss the results in details below and provide the proofs in the appendices.
\begin{table}[thb]\centering
\footnotesize
\caption{Coherence dilution in the one-shot and asymptotic scenarios with MIO, DIO, IO, and SIO. The two columns ``one-shot'' and ``asymptotic'' denote the coherence measures in the one-shot and asymptotic scenarios, respectively.}   \label{Table:results}
	\begin{tabular}{cccc}
		\hline
		Operation & One-shot & Asymptotic\\
		\hline
		MIO &$C_{MIO}^{\varepsilon}\approx C_{\max}^{\varepsilon}$&$C_{MIO}^\infty=C_r$\\
		DIO &$C_{DIO}^{\varepsilon}\approx C_{\Delta,\max}^{\varepsilon}$&$C_{DIO}^\infty=C_r$\\
		IO & $C_{IO}^{\varepsilon}=C_0^{\varepsilon}$&$C_{IO}^\infty=C_f$\\
		SIO & $C_{SIO}^{\varepsilon}=C_0^{\varepsilon}$&$C_{SIO}^\infty=C_f$\\
				\hline
	\end{tabular}
\end{table}

\emph{Framework.}---Considering a computational basis $I = \{\ket{i}\}_{i=0}^{d-1}$ in a $d$-dimensional Hilbert space $\mathcal{H}_d$, incoherent states are defined as $\delta = \sum_{i=0}^{d-1}p_i\ket{i}\bra{i}$,
where $\{p_i\}$ is a probability distribution. We denote the set of incoherent states as $\mathcal{I}$.
The MIOs introduced in Ref.~\cite{aberg2006quantifying} are physical or completely positive trace preserving (CPTP) maps $\Lambda$ such that $\Lambda(\delta)\in\mathcal{I}$, $\forall\delta\in\mathcal{I}$.
A CPTP map $\Lambda$ is called a \emph{dephasing-covariant incoherent operation} (DIO) if $\Lambda[\Delta(\rho)]=\Delta[\Lambda(\rho)]$ for all $\rho$ \cite{Chitambar16prl, Marvian16}. Here, $\Delta(\rho) = \sum_i\ket{i}\bra{i}\bra{i}\rho\ket{i}$ is the dephasing channel, and clearly DIO is a subset of MIO.  Another subset of MIO are the \emph{incoherent operations} (IOs) \cite{Baumgratz14} which are CPTP maps that admit a Kraus operator representation $\Lambda(\rho) = \sum_n K_n\rho K_n^\dag$ with the $\{K_n\}$ being ``incoherent-preserving'' operators, that is ${K_n\delta K_n^\dag}/{p_n} \in \mathcal{I}$ for all $n$ and all $\delta\in\mathcal{I}$.  Here $p_n = \mathrm{Tr}\left[ K_n\rho K_n^\dag\right]$ is the probability of obtaining the $n$th outcome.
In general, when $\Lambda(\rho) = \sum_n K_n\rho K_n^\dag$ is an  incoherent operation, the Kraus operator can always be represented as $K_n=\sum_{i}c_i\ket{f(i)}\bra{i}$, where $f$ is a function on the index set and $c_i\in[0,1]$ \cite{Winter16}.
Finally, \emph{strictly incoherent operations} (SIOs) are CPTP maps admitting a Kraus operator representation $\Lambda(\rho) = \sum_n K_n\rho K_n^\dag$ such that both $\{K_n\}$ and $\{K_n^\dag\}$ are incoherent-preserving operators \cite{Winter16}. The relations among different incoherent operations are shown in Fig.~\ref{fig:IO}.

\begin{figure}[bht]
\centering \resizebox{6cm}{!}{\includegraphics[scale=1]{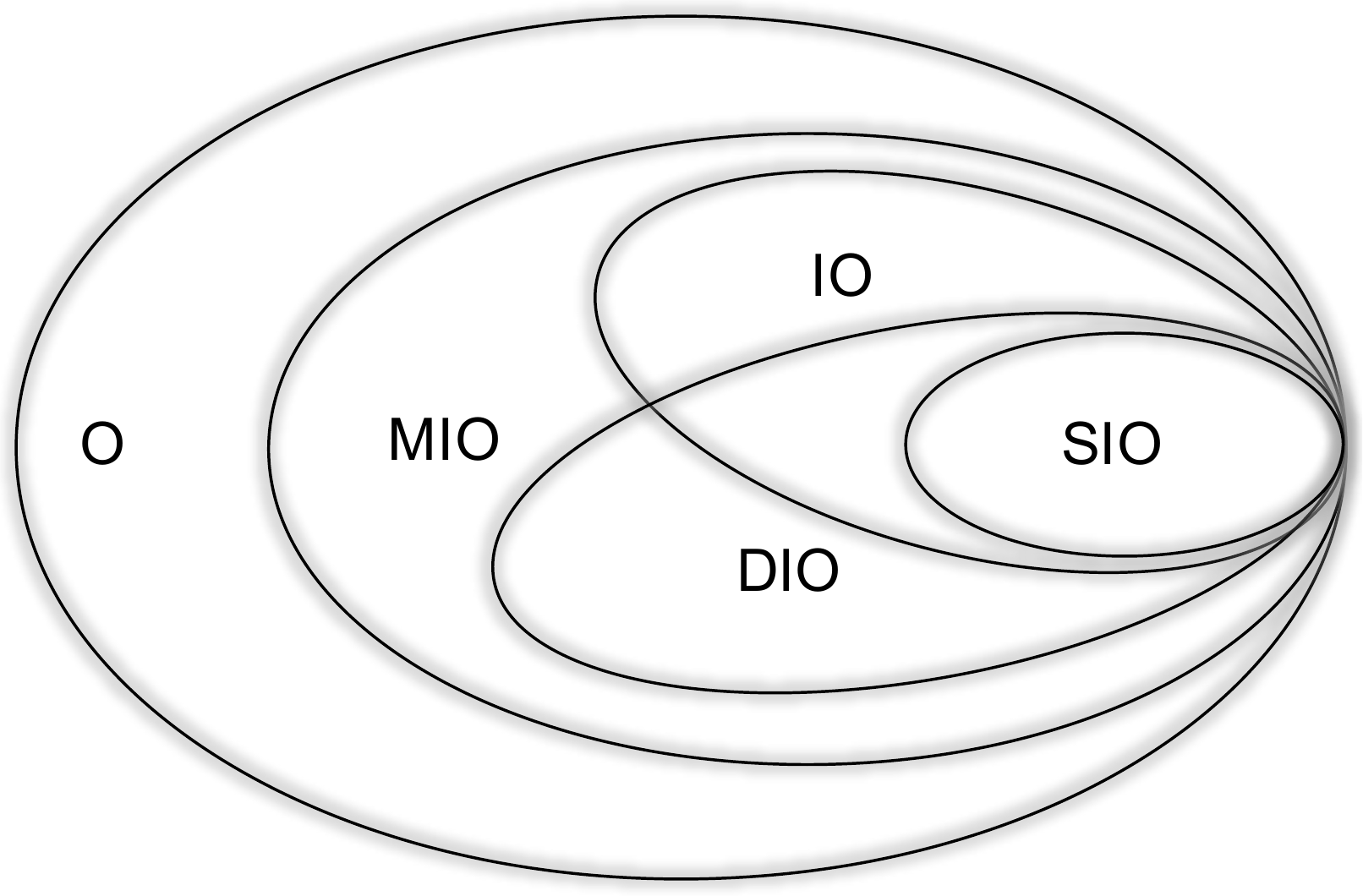}}
\caption{Comparison among different incoherent operations. The largest set $O$ contains all possible physical operations.} \label{fig:IO}
\end{figure}


Associated with each of these operational classes is a family of monotone functions.  A real-valued function $C(\rho)$ is called a MIO (DIO) monotone if $C(\rho)\geq C[\Lambda(\rho)]$ whenever $\Lambda$ is MIO (DIO). Since IO and SIO are defined in terms of Kraus operator representations, it is natural to modify the monotonicity condition to average postmeasurement values. That is, $C(\rho)$ is called an IO (SIO) monotone if $C(\rho)\geq \sum_n p_n C(K_n\rho_n K_n^\dagger/p_n)$ whenever $\{K_n\}$ ($\{K_n\}$ and also $\{K_n^\dagger\}$) are incoherent-preserving Kraus operators.

Following the notions of monotonicity defined above, a coherence measure for a class of operations $\mathcal{O}$ is defined as a real valued function $C(\rho)$ that satisfies the following requirements: (C1) $C(\rho)\ge 0$ with equality if and only if $\rho\in\mathcal{I}$; (C2) $C(\rho)$ is a monotone for operational class $\mathcal{O}$; (C3) Convexity: coherence cannot increase under mixing states,  i.e., $C\left(\sum_n p_n\rho_n\right) \le \sum_n p_nC(\rho_n)$. When a function satisfies conditions (C1) and (C2), we call it a coherence monotone for operational class $\mathcal{O}$. Although a coherence monotone may not satisfy (C3), it can still play important roles in tasks that process coherence.

\emph{Coherence monotones}---
In the following, we first introduce three coherence monotones of quantum states defined on Hilbert space $\mathcal{H}_d$. To do so, we make use of the { generalized quantum $\alpha$-\renyi divergence,
$\tilde{D}_{\alpha}(\rho||\sigma)=\frac{1}{\alpha-1}\log_2\left\{\Tr\left[\left(\sigma^{\frac{1-\alpha}{2\alpha}}\rho\sigma^{\frac{1-\alpha}{2\alpha}}\right)^{\alpha}\right]\right\}$,
where $\alpha\in(0,1)\cup(1,\infty)$ \cite{Lennert2013}}. For $\alpha=0,1,\infty$, the \renyi divergence is defined by taking the limit of $\alpha\to 0,1,\infty$, respectively.
Then, quantum coherence measures can be defined by
	$C_\alpha(\rho) = \min_{\delta\in S} \tilde{D}_{\alpha}(\rho||\delta)$, where $S\subset \mathcal{D}(\mathcal{H}_d)$ and $\mathcal{D}(\mathcal{H}_d)$ denotes the set of density matrices in $\mathcal{H}_d$ \cite{Rastegin16}. First we let $S = \mathcal{I}$. In the limit of $\alpha \rightarrow 1$, we recover the relative entropy of coherence
$C_{r}(\rho) = \min_{\delta\in\mathcal{I}}S(\rho||\delta)$. Here, $S(\rho||\delta) = \Tr[\rho\log_2\rho] - \Tr[\rho\log_2\delta]$ is the quantum relative entropy and  the minimization is over all incoherent states.
When $\alpha\rightarrow\infty$, we have the max-relative entropy $D_{\max}(\rho||\sigma)=\lim_{\alpha\to\infty}\tilde{D}_\alpha(\rho||\sigma)=\log_2\min\{\lambda|\rho\leq\lambda\sigma\}$. Then we introduce our first coherence monotone by
\begin{equation}\label{Eq:maxentropy}
  C_{\max}(\rho)=\min_{\delta\in\mathcal{I}}D_{\max}(\rho||\delta)
\end{equation}
With the properties of the max-relative entropy $D_{\max}(\rho||\sigma)$ \cite{Datta2009}, we can verify that $C_{\max}(\rho)$ satisfies (C1)---(C2). In addition, we show that it is quasi-convex, i.e.,
$  C_{\max}\left(\sum_ip_i\rho_i\right)\leq\max_iC_{\max}(\rho_i)$.

\emph{Theorem 1.}---The quantity $C_{\max}(\rho)$ is a coherence monotone under MIO \cite{Chitambar16pra} and it is quasiconvex.

We next consider a different set
\[A_\rho=\left\{\tfrac{1}{t}\left[(1+t)\Delta(\rho)-\rho\right]\;|\;t>0,\;(1+t)\Delta(\rho)-\rho\geq 0\right\}\]
 and let $\overline{A}_\rho$ denote its closure.  In particular, $\Delta(\rho)\in\overline{A}_\rho$ by taking the limit $t\to\infty$.  Analogous to Eq. \eqref{Eq:maxentropy}, we define $C_{\Delta, \max}=\min_{\sigma\in \overline{A}_\rho}D_{\max}(\rho||\sigma)$.
 The quantity $C_{\Delta, \max}$ was originally introduced in Ref. \cite{Chitambar16pra} and shown to have the simplified form
\begin{equation}
C_{\Delta, \max}(\rho)=\log_2\min\{\lambda\;|\;\rho\leq\lambda\Delta(\rho)\}.
\end{equation}

\emph{Theorem 2.}---$C_{\Delta, \max}$ is a coherence monotone under DIO \cite{Chitambar16pra} and it is quasiconvex.

Alternatively, another way of defining coherence measure is via the convex-roof construction.  For instance, the coherence of formation $C_f(\rho)$ can be defined by $C_{f}(\rho)\equiv\min_{\{p_j,\ket{\psi_j}\}}\sum_jp_jS(\Delta(\ket{\psi_j}\bra{\psi_j}))$ \cite{aberg2006quantifying,Yuan15intrinsic}.
Here, $S(\rho)=-\Tr[\rho\log_2\rho]$ is the Von-Neumann entropy, and the minimization is over all possible pure-state decompositions of $\rho = \sum_j p_j\ket{\psi_j}\bra{\psi_j}$.
Now, suppose $\ket{\psi_j}=\sum_{i=0}^{d-1}a_{ij}\ket{i}$ and denote $T_j$ to be the number of nonzero elements in $\{a_{0j},\cdots,a_{d-1,j}\}$.
We introduce our third coherence monotone $C_0(\rho)$,
\begin{equation}
	C_0(\rho)=\min_{\{p_j,\ket{\psi_j}\}}\max_{j}\log_2 T_j.
\end{equation}
Under this convex-roof construction, we show that  $C_0(\rho)$ is a coherence monotone. Nevertheless, it does not satisfy the convexity requirement (C3).

\emph{Theorem 3.}---$C_0(\rho)$ is a coherence monotone under IO and it violates the convexity requirement (C3).

\emph{One-shot dilution.}---
With the three coherence monotones, we are now ready to consider the process of coherence dilution which converts maximally coherent states into a target state.  The canonical maximally coherent state of dimension $M$ is given by $\ket{\Psi_M}=\frac{1}{\sqrt{M}}\sum_{i=0}^{M-1}\ket{i}$ \cite{Baumgratz14}.
One-shot coherence cost measures the minimal length $M$ such that $\ket{\Psi_M}$ can be converted into a target state $\rho$ via incoherent operations within some finite error. Based on different definitions of incoherent operations, we define different one-shot coherence costs.

\emph{Definition 1.}--- Let $\mathcal{O}\in\{MIO, DIO, IO, SIO\}$ denote a class of incoherent operations.  Then for a given state $\rho$ and $\varepsilon\ge 0$, the one-shot coherence cost under $\mathcal{O}$ is defined by
\begin{equation}\label{Eq:formationIO}
    C_{\mathcal{O}}^\varepsilon(\rho)=\min_{\Lambda\in \mathcal{O}}\{\log_2 M\vert F[\rho,\Lambda(\Psi_M)]\geq 1-\varepsilon\},
\end{equation}
where $F(\rho,\sigma) = \left(\Tr[\sqrt{\sqrt{\rho}\sigma\sqrt{\rho}}]\right)^2$ is the fidelity measure between two states $\rho$ and $\sigma$.

In the definition, there is a ``smoothing" parameter $\varepsilon$ in the dilution process, which is also known as the failure probability in cryptography \cite{renner2008security}. Instead of obtaining the exact state $\rho$, we allow the final state to be deviated no more than $\varepsilon$ from $\rho$, where the deviation is measured in fidelity. Specifically, when $\varepsilon = 0$, it becomes the case of perfect state conversion.

As shown in Fig.~\ref{fig:IO}, since $SIO\subset IO \subset MIO$ and $SIO\subset DIO \subset MIO$, we have $C_{MIO}^\varepsilon(\rho)\le  C_{IO}^\varepsilon(\rho) \le C_{SIO}^\varepsilon(\rho)$ and $C_{MIO}^\varepsilon(\rho) \le C_{DIO}^\varepsilon(\rho)\le C_{SIO}^\varepsilon(\rho)$ in general. However, IO and DIO are incomparable operations and we thus cannot derive a relationship between $C_{IO}^\varepsilon(\rho)$ and $C_{DIO}^\varepsilon(\rho)$ directly from the definitions.  Nevertheless, the hierarchy $C_{DIO}^\varepsilon(\rho)\leq C_{IO}^\varepsilon(\rho)$ can be established in the asymptotic case according to the following theorems.

To characterize coherence cost with certain error $\varepsilon$, we apply a smoothing to a general coherence measure $C(\rho)$ by minimizing over states $\rho'$ that satisfy $F(\rho,\rho')\geq 1- \varepsilon$,
\begin{equation}\label{}
    C^\varepsilon(\rho) = \min_{\rho': F(\rho,\rho')\geq 1-\varepsilon}C(\rho').
\end{equation}
We show that the one-shot coherence cost under MIO is bounded by the smoothed coherence measure $C_{\max}^{\varepsilon}(\rho)$,

\emph{Theorem 4.}---For any state $\rho$ and $\varepsilon\ge0$
\begin{equation}\label{Eq:maxthmMIO}
\begin{aligned}
  C_{\max}^{\varepsilon}(\rho)&\leq C_{MIO}^\varepsilon(\rho)\leq C_{\max}^{\varepsilon}(\rho)+1.\\
 \end{aligned}
\end{equation}
Similarly, the one-shot coherence cost using DIO is bounded by the smoothed coherence measure $C_{\Delta,\max}^{\varepsilon}(\rho)$,

\emph{Theorem 5.}---For any state $\rho$ and $\varepsilon\geq 0$,
\begin{equation}\label{Eq:maxDIO}
C^\varepsilon_{\Delta,\max}(\rho)\leq C^\varepsilon_{DIO}(\rho)\leq C^\varepsilon_{\Delta,\max}(\rho)+1.
\end{equation}
Finally, the one-shot coherence cost under IO and SIO is exactly characterized by the smoothed coherence measure $C_0^{\varepsilon}(\rho)$,

\emph{Theorem 6.}---For any state $\rho$ and $\varepsilon\ge0$
\begin{equation}\label{Eq:maxthmIO}
\begin{aligned}
C_{IO}^\varepsilon(\rho) = C_{SIO}^\varepsilon(\rho)=C_0^{\varepsilon}(\rho).
 \end{aligned}
  \end{equation}
The main proof idea of the theorems is to firstly prove the lower bound of the one-shot coherence cost by exploiting the monotonicity property of coherence measures. Then, the next step is to explicitly construct an incoherent operation such that this lower bound is saturated.
We leave the detailed proofs and the explicit transformations in the Appendix~\ref{theorem6}.

\emph{Asymptotic case.---}
Our one-shot coherence cost results hold for any state and any smooth parameter $\varepsilon$. As a special case, we can consider the asymptotic coherence dilution with an infinitely large number of i.i.d. target states.
We define the regularized coherence cost by taking the limit $n\rightarrow\infty$ and $\varepsilon\to 0^+$:
\begin{equation}\label{infformation}
  C_{\mathcal{O}}^\infty(\rho)=\lim_{\varepsilon\to 0^+}\lim_{n\to\infty}\frac{1}{n}C_{\mathcal{O}}^\varepsilon(\rho^{\otimes n}).
\end{equation}
where $\mathcal{O}\in\{MIO,IO,SIO,DIO\}$. Following the results of one-shot coherence dilution, we obtain coherence cost in the asymptotic case.

\emph{Theorem 7}---For any state $\rho$, the asymptotic coherence cost under MIO  is quantified by
\begin{equation}\label{Eq:CMIOasy}
\begin{aligned}
  C_{MIO}^\infty(\rho)=C_r(\rho).
  \end{aligned}
\end{equation}

Combining this with the work of Winter and Yang \cite{Winter16}, we see that both the asymptotic coherence cost under MIO and the the distillable coherence under IO is given by the relative entropy of coherence $C_r(\rho)$.  Since MIO is more powerful than IO, it follows that the distillable coherence under MIO is also characterized by $C_r(\rho)$.  One can also see this by noting that the converse proof for distillable coherence given in Ref. \cite{Winter16} also holds for MIO.  Thus, coherence is asymptotically reversible under MIO.  This is a slight strengthening of the general result presented in Ref. \cite{Brandao15} which implies reversibility by MIO when an asymptotically small amount of coherence can be generated.

Interestingly, we find that $C^\infty_{DIO}(\rho)=C_r(\rho)$, which is rather surprising since $C_{MIO}^0(\rho)\leq C_{DIO}^0(\rho)$, with the inequality being strict in many cases. Yet, evidently MIO and DIO yield the same coherence dilution rate in the asymptotic case, i.e.,
\begin{equation}\label{Eq:CIOasy}
\begin{aligned}
C_r(\rho) &=\lim_{\varepsilon\to 0^+}\lim_{n\to\infty}\frac{1}{n}C_{MIO}^\varepsilon(\rho^{\otimes n}) \\
&=\lim_{\varepsilon\to 0^+}\lim_{n\to\infty}\frac{1}{n}C_{DIO}^\varepsilon(\rho^{\otimes n}).
\end{aligned}
\end{equation}
The proof of this fact will be presented in a separate paper as it employs techniques quite different from the ones used in this work.

\emph{Theorem 8}---For any state $\rho$, the asymptotic coherence cost under IO and SIO is quantified by
\begin{equation}\label{Eq:CIOasy}
\begin{aligned}
   C_{IO}^\infty(\rho)=C_{SIO}^\infty(\rho)=C_f(\rho).
  \end{aligned}
\end{equation}
The asymptotic coherence dilution under IO and SIO has been investigated by Winter and Yang \cite{Winter16}. The coherence cost is characterized by the coherence of formation $C_f(\rho)$, which is consistent with our result. Note that, although the problems of one-shot and asymptotic coherence dilutions are similar, the methods are different. Our method holds for any state and any $\varepsilon$ while the method by Winter and Yang holds only for infinite copies of the same state and the limit with $\varepsilon\to 0^+$. Furthermore, the definition in Eq.~\eqref{infformation} can also be generalized to
\begin{equation}
  C_{\mathcal{O}}^{\infty,\varepsilon}(\rho)=\lim_{n\to\infty}\frac{1}{n}C_{\mathcal{O}}^\varepsilon(\rho^{\otimes n})~~ \varepsilon\in (0,1).
\end{equation}
By applying the property of the quantum asymptotic equipartition \cite{tomamichel2012framework,tomamichel2009fully,2016arXiv160701796D}, we may also obtain the same results in Theorem 7 and 8.
It would be an interesting future work to propose the generalized asymptotic coherence dilution with a finite smooth parameter $\varepsilon$ and relate it to the corresponding coherence monotone.

\emph{Discussion.}---
Our work solves the open problem of one-shot coherence dilution \cite{streltsov2016quantum} and derive the conventional coherence dilution formula in the asymptotic limit. Our results also indicate that coherence is asymptotically reversible under MIO and DIO. According to recent investigations of the resource theory of coherence \cite{streltsov2016quantum}, our results also shed light on the role of coherence as a resource in quantum information processing tasks like random number generation \cite{Yuan15intrinsic, ma2017source} and cryptography \cite{coles2016numerical}.

In the asymptotic scenario, the distillable coherence and coherence cost are additive, i.e., $C_f(\rho_1\otimes\rho_2) = C_f(\rho_1)+C_f(\rho_2)$ and $C_r(\rho_1\otimes\rho_2) = C_r(\rho_1)+C_r(\rho_2)$ \cite{Winter16}. In contrast, the proposed one-shot coherence monotones do not satisfy the additivity property in general. For example, one can show that $C_0^\varepsilon(\rho^{\otimes n}) \neq C_0^\varepsilon(\rho)+C_0^\varepsilon(\rho^{\otimes n-1})$ when $n\to\infty$ and $\varepsilon\to 0^+$. Furthermore, given $\varepsilon\geq \varepsilon_1+\varepsilon_2$, we can derive that $C_0^{\varepsilon}(\rho_1\otimes\rho_2)\leq C_0^{\varepsilon_1}(\rho_1)+C_0^{\varepsilon_2}(\rho_2)$. The inequality also holds for $C_{\max}^{\varepsilon}$ and $C_{\Delta,\max}^{\varepsilon}$ with proofs shown in Appendix~\ref{additi}. An interesting future direction is to study general and tight additive inequalities for these coherence monotones.

Another interesting perspective is to quantify the one-shot coherence distillation under different incoherent operations. The essential problem is to find the conversion from a general mixed state to the maximally coherent state. Some interesting results have been obtained \cite{regula2017one} but the general results still remain to be solved.
Furthermore, due to the strong similarity, we also expect that our result can shed light on the one-shot coherence conversion under thermal operations in the thermodynamic scenario, which has been partially solved only for qubits recently \cite{Horodecki15}.

We acknowledge F.~Buscemi, T.~Peng, and A.~Winter for the insightful discussions. This work was supported by the National Key R$\&$D Program of China Grant No. 2017YFA0303900 and 2017YFA0304004, the National Natural Science Foundation of China Grant No. 11674193 and the National Science Foundation (NSF) Early CAREER Award No.1352326. Q.~Z., Y.~L.~and X.~Y.~contributed equally to this Letter.


\bibliographystyle{apsrev4-1}
\bibliography{BibCoherence3}

\onecolumngrid
\appendix

\section{Proof of Theorem 1, 2}
\begin{proof}
We refer to \cite{Chitambar16pra,Datta2009} for the proof that $C_{\max}(\rho)$ and $C_{\Delta,\max}(\rho)$ are coherence monotones under MIO and DIO satisfying (C1) and (C2), respectively. Also, we know from \cite{Datta2009} that $D_{\max}(\rho||\sigma)$ is quasi-convex, i.e.
\begin{equation}\label{quasiconvex}
  \begin{aligned}
D_{\max}\left(\sum_ip_i\rho_i\Big|\Big|\sum_ip_i\sigma_i\right)\leq\max_iD_{\max}(\rho_i||\sigma_i).
  \end{aligned}
\end{equation}
Then for $\rho=\sum_i p_i \rho_i$, suppose $\min_{\delta\in\mathcal{I}}D_{\max}(\rho_i||\delta)=D_{\max}(\rho_i||\delta_i^\ast)$, we have
\begin{equation}\label{quasiconvex2}
  \begin{aligned}
C_{\max}\left(\sum_i p_i\rho_i\right)&=\min_{\delta\in\mathcal{I}} D_{\max}\left(\sum_i p_i\rho_i\Big|\Big|\delta\right)\\
&\leq D_{\max}\left(\sum_i p_i\rho_i\Big|\Big|\sum_i p_i \delta_i^\ast\right)\\
&\leq \max_iD_{\max}\left(\rho_i||\delta_i^\ast\right)\\
&= \max_iC_{\max}(\rho_i),
  \end{aligned}
\end{equation}
and we conclude that $C_{\max}(\rho)$ is quasi-convex. The same argument also holds for $C_{\Delta,\max}(\rho)$.
\end{proof}

\section{Proof of Theorem 3}
In order to accomplish our proof, we first review a Lemma proposed in \cite{Winter16}.
\begin{lemma}\label{pure}
If $\ket{\psi}=\frac{ K_i\ket{\phi}}{\sqrt{\Tr[K_i\ket{\phi}\bra{\phi}K_i^\dag]}}$ where $\{K_i\}$ is a set of incoherent-preserving Kraus operators, then $C_0(\ket{\psi}\bra{\psi})\leq C_0(\ket{\phi}\bra{\phi})$.
\end{lemma}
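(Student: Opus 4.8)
The plan is to reduce the statement to an elementary combinatorial fact about the supports of the two pure states. First I would observe that for any pure state $\ket{\chi}$ the convex-roof in the definition of $C_0$ is trivial: since $\ket{\chi}\bra{\chi}$ has rank one, every pure-state decomposition $\ket{\chi}\bra{\chi}=\sum_j p_j\ket{\psi_j}\bra{\psi_j}$ forces each $\ket{\psi_j}$ to be proportional to $\ket{\chi}$, so each has the same number of nonzero amplitudes as $\ket{\chi}$. Hence $C_0(\ket{\chi}\bra{\chi})=\log_2 T_\chi$, where $T_\chi$ denotes the number of nonzero components of $\ket{\chi}$ in the computational basis $I$. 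With this reduction in hand, the lemma becomes the purely combinatorial claim $T_\psi\le T_\phi$.

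Next I would invoke the canonical form of an incoherent-preserving Kraus operator recalled earlier, writing $K_i=\sum_j c_j\ket{f(j)}\bra{j}$ with $f$ a function on the index set and $c_j\in[0,1]$. Expanding $\ket{\phi}=\sum_j a_j\ket{j}$, I compute $K_i\ket{\phi}=\sum_j c_j a_j\ket{f(j)}$, so after collecting terms the amplitude of $\ket{k}$ in the (unnormalized) image is $\sum_{j:f(j)=k}c_j a_j$. Normalization only rescales by a positive constant and does not change which amplitudes vanish, so $T_\psi$ equals the number of indices $k$ for which this sum is nonzero.

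The key observation is then that the support of $\ket{\psi}$ is contained in the image set $f(\mathrm{supp}(\ket{\phi}))$: a basis vector $\ket{k}$ can receive nonzero weight only if some $j$ with $a_j\ne 0$ satisfies $f(j)=k$. Because $f$ is a single-valued function, $|f(\mathrm{supp}(\ket{\phi}))|\le|\mathrm{supp}(\ket{\phi})|=T_\phi$, and therefore $T_\psi\le T_\phi$, which yields $C_0(\ket{\psi}\bra{\psi})=\log_2 T_\psi\le\log_2 T_\phi=C_0(\ket{\phi}\bra{\phi})$.

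I expect no serious obstacle here; the only point requiring care is that possible cancellations among the terms $\sum_{j:f(j)=k}c_j a_j$ could in principle shrink the support even further, but since the lemma is an inequality this can only help—cancellation never increases $T_\psi$—so the bound $T_\psi\le|f(\mathrm{supp}(\ket{\phi}))|$ is all that is needed. The conceptual content is entirely captured by the fact that an incoherent-preserving Kraus operator acts on basis labels through a function $f$, so it can merge basis vectors but never split them, and hence can only decrease the coherence rank.
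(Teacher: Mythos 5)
Your proof is correct. The paper itself does not prove this lemma---it is quoted from Winter and Yang \cite{Winter16}---but your argument is exactly the standard one underlying that reference: for a rank-one state the convex roof collapses so that $C_0(\ket{\chi}\bra{\chi})=\log_2 T_\chi$, and an incoherent-preserving Kraus operator $K_i=\sum_j c_j\ket{f(j)}\bra{j}$ maps the support of $\ket{\phi}$ into $f(\mathrm{supp}(\ket{\phi}))$, whose cardinality cannot exceed $T_\phi$ because $f$ is single-valued. Your explicit remark that cancellations only help the inequality closes the one point that needed care, so nothing is missing.
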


We will prove that $C_{0}(\rho)$ is a coherence monotone under both MIO and IO, but not convex.

\begin{proof}

(C1) Suppose $C_0(\rho)=0$ and the corresponding decomposition of $\rho$ is $\{p_j,\ket{\psi_j}\}$. Since $T_j=1$ for all $j$, we know that $\ket{\psi_j}=\ket{j'}$, which means that $\rho\in\mathcal{I}$. Conversely, suppose $\rho=\sum_i\delta_i\ket{i}\bra{i}$, we can choose $\{\delta_i,\ket{i}\}$ as a decomposition which leads to $C_0(\rho)=0$.

(C2, MIO) Let $\{p_j,\ket{\psi_j}\}$ be the decomposition such that $C_0(\rho)=\max_j\log_2 T_j$. Let $M=\max_jT_j$. Let $\Lambda$ be any maximally incoherent operation with $\Lambda(\rho)=\sum_{n}K_n\rho K_n^\dagger$, and $\sum_{n} K_n^\dagger K_n=\mathcal{I}$. Then
\begin{equation}\label{C2}
  \Lambda(\rho)=\sum_{i}\sum_{n}p_i K_n\ket{\psi_i}\bra{\psi_i}K_n^\dag.
\end{equation}
We choose an ensemble of $\Lambda(\rho)$ as
\begin{equation}\label{ensemble}
  \left\{p_i\Tr[K_n\ket{\psi_i}\bra{\psi_i}K_n^\dag],\frac{ K_n\ket{\psi_i}}{\sqrt{\Tr[K_n\ket{\psi_i}\bra{\psi_i}K_n^\dag]}}\right\}
\end{equation}
From Lemma \ref{pure} we know that $\max_{i,n}\log_2 T_{in}\leq\log_2 M$. Since we take the minimum over all decompositions, we conclude that $C_0(\Lambda(\rho))\leq C_0(\rho)$.

(C2, IO) Let $\{p_j,\ket{\psi_j}\}$ be the decomposition such that $C_0(\rho)=\max_j\log_2 T_j=\log_2 M$. Let $\Lambda$ be any incoherent operation with $\Lambda(\rho)=\sum_{n}K_n\rho K_n^\dagger$. Then the post-measurement state of $n$th outcome is
\begin{equation}\label{C3}
  \rho_n=\frac{1}{\Tr[K_n\rho K_n^\dag]}\sum_{i}p_i K_n\ket{\psi_i}\bra{\psi_i}K_n^\dag.
\end{equation}
We choose an ensemble of $\rho_n$ as
\begin{equation}\label{ensemble2}
  \left\{p_i\frac{\Tr[K_n\ket{\psi_i}\bra{\psi_i}K_n^\dag]}{\Tr[K_n\rho K_n^\dag]},\frac{ K_n\ket{\psi_i}}{\sqrt{\Tr[K_n\ket{\psi_i}\bra{\psi_i}K_n^\dag]}}\right\}
\end{equation}
From Lemma \ref{pure} we know that $\max_{i}\log_2 T_i\leq\log_2 M$. Since we take the minimum over all decompositions, we conclude that for any $n$, $C_0(\rho_n)\leq C_0(\rho)$. Therefore
\[\sum_{n}p_nC_0(\rho_n)\leq C_0(\rho).\]

(C3) We give a counter example to show that $C_0$ does not satisfy the convex property.
We take $p_1=p_2=\frac{1}{2}$, $\rho_1=\frac{1}{3}(\ket{0}+\ket{1}+\ket{2})(\bra{0}+\bra{1}+\bra{2})$, $\rho_2=\frac{1}{3}(\ket{0}\bra{0}+\ket{1}\bra{1}+\ket{2}\bra{2})$, and $\rho=p_1\rho_1+p_2 \rho_2$.
Consequently, we have $C_0(\rho_1)=\log_2 3$ and $C_0(\rho_2)=\log_2 1=0 $. Moreover, $C_0(\rho)\ge \log_22$ because for each decomposition $\{p_j,\ket{\psi_j}\}$ of $\rho$ we have $\max_{j} T_j\ge 2$, since otherwise $\rho$ becomes a mixture of incoherent states, contradicting the fact that $\rho$ has non-diagonal terms. Thus
\begin{equation}
p_1C_0(\rho_1)+ p_2C_0(\rho_2)=\frac{1}{2}\log_23<\log_22\leq C_0(\rho),
\end{equation}
which is contradict to the convexity requirement.
\end{proof}

\section{Proof of Theorem 4}
\begin{proof}
First we prove the left hand side, let $\log_2 M=C_{MIO}^\varepsilon(\rho)$ and $\sigma=\sum_{i=0}^{M-1}\frac{1}{M}\ket{i}\bra{i}$. The definition of $C_{MIO}^\varepsilon(\rho)$ implies that there exists an operation $\Lambda\in MIO$ such that
  $F(\rho,\Lambda(\Psi_M))\geq 1-\varepsilon$ and we let $\rho'=\Lambda(\Psi_M)$. Then we have
\begin{equation}\label{maxleftbound}
  \begin{aligned}
  C_{\max}^\varepsilon(\rho)&\leq C_{\max}(\rho')\\
  &=\min_{\delta\in\mathcal{I}}D_{\max}(\Lambda(\Psi_M)||\delta)\\
  &\leq D_{\max}(\Lambda(\Psi_M)||\Lambda(\sigma))\\
  &\leq D_{\max}(\Psi_M||\sigma)\\
  &=\log_2 M.
  \end{aligned}
\end{equation}
For the right hand side, we choose the state $\rho'$ reaching minimum such that $C_{\max}^{\varepsilon}(\rho)=C_{\max}(\rho')=D_{\max}(\rho'||\delta)=\log_2 \lambda$ with $F(\rho,\rho')\geq 1-\varepsilon$. Let $M=\lceil\lambda\rceil$, then $M\delta\geq
\rho'$. Consider the following map
\begin{equation}\label{maxmap}
\begin{aligned}
  \Lambda(\omega)=&\frac{M}{M-1}\left(\Tr[\Psi_M\omega]-\frac{1}{M}\right)\rho'\\
  &+\frac{M}{M-1}(1-\Tr[\Psi_M\omega])\delta.
  \end{aligned}
\end{equation}
Since $\Tr[\Psi_M\delta]=\frac{1}{M}$ for all $\delta\in\mathcal{I}$, we know that $\Lambda(\delta)\in\mathcal{I}$ for all $\delta\in\mathcal{I}$, thus $\Lambda\in MIO$. We can rewrite the map as
\begin{equation}\label{maxmaprewrite}
\begin{aligned}
  \Lambda(\omega)=&\frac{M}{M-1}(1-\Tr[\Psi_M\omega])\left(\delta-\frac{1}{M}\rho'\right)+\Tr[\Psi_M\omega]\rho'.
  \end{aligned}
\end{equation}
Since $\delta\geq \frac{1}{M}\rho'$, we know that $\Lambda$ is completely positive. Notice that $\Lambda(\Psi_M)=\rho'$, we have
\begin{equation}\label{maxrightbound}
  \begin{aligned}
  C_{MIO}^\varepsilon(\rho)&\leq \log_2 M\\
  &\leq \log_2(1+\lambda)\\
  &\leq 1+\log_2\lambda\\
  &=C_{\max}^{\varepsilon}(\rho)+1.
  \end{aligned}
\end{equation}
\end{proof}

\section{Proof of Theorem 5}
\begin{proof}
  The proof follows just like the MIO case. Let $\log_2 M=C_{DIO}^\varepsilon(\rho)$. The definition of $C_{DIO}^\varepsilon(\rho)$ implies that there exists an operation $\Lambda\in DIO$ such that
  $F(\rho,\Lambda(\Psi_M))\geq 1-\varepsilon$ and we let $\rho'=\Lambda(\Psi_M)$.  For the lower bound on $C^\varepsilon_{DIO}(\rho)$, we have
\begin{align}
C^\varepsilon_{\Delta,\max}(\rho)&\leq C_{\Delta,\max}(\rho')\notag\\
&=\min_{\delta\in \overline{A}_{\rho'}}D_{\max}(\Lambda(\Psi_M)||\delta)\notag\\
&\leq D_{\max}(\Lambda(\Psi_M)||\Delta(\Lambda(\Psi_M)))\notag\\
&= D_{\max}(\Lambda(\Psi_M)||\Lambda(\Delta(\Psi_M)))\notag\\
&\leq D_{\max}(\Psi_M||\Delta(\Psi_M))\notag\\
&=\log_2 M,
\end{align}
where we use the facts that (i) $\Delta(\rho')=\Delta(\Lambda(\Psi_m))\in \overline{A}_{\rho'}$, and (ii) $\Lambda$ commutes with $\Delta$.

For the upper bound on $C^\varepsilon_{DIO}(\rho)$, let $\rho'$ be the minimizing state such that $C^\varepsilon_{\Delta,\max}(\rho)=C_{\Delta,\max}(\rho')=\log_2\lambda$, where $\rho'\leq\lambda\Delta(\rho')$.  Let $M=\lceil\lambda\rceil$ and define the map
\begin{align}
\mc{E}(\omega)&=\frac{M}{M-1}\left(\left[\Tr(\Psi_M\omega)-\frac{1}{M}\right]\rho'+\left[1-\Tr(\Psi_M\omega)\right]\Delta(\rho')\right)\notag\\
&=\frac{M}{M-1}\left(\left[1-\Tr(\Psi_M\omega)\right]\left(\Delta(\rho')-\frac{1}{M}\rho'\right)\right)+\Tr(\Psi_M\omega)\rho'.
\end{align}
This map is CP since $\rho'\leq M\Delta(\rho')$, and we can see that it is dephasing-covariant since
\begin{align}
\mc{E}(\Delta(\omega))&=\frac{M}{M-1}\left(\left[1-\Tr(\Psi_M\Delta(\omega))\right]\left(\Delta(\rho')-\frac{1}{M}\rho'\right)\right)+\Tr(\Psi_M\Delta(\omega))\rho'\notag\\
&=\frac{M}{M-1}\left(\left[1-\frac{1}{M}\right]\left(\Delta(\rho')-\frac{1}{M}\rho'\right)\right)+\frac{1}{M}\rho'=\Delta(\rho')\\
\Delta(\mc{E}(\omega))&=\frac{M}{M-1}\left(\left[1-\Tr(\Psi_M\omega)\right]\left(\Delta(\rho')-\frac{1}{M}\Delta(\rho')\right)\right)+\Tr(\Psi_M\omega)\Delta(\rho')=\Delta(\rho').
\end{align}
Finally note that $\mc{E}(\Psi_M)=\rho'$ by construction, and therefore
\[C_{DIO}^\varepsilon(\rho)=\log_2 M\leq 1+ \log_2\lambda=C_{\Delta,\max}(\rho')+1=C^\varepsilon_{\Delta,\max}(\rho)+1.\]
\end{proof}

\section{Proof of Theorem 6}\label{theorem6}

\begin{proof}
First we study $IO$. For the lower bound on $C_{IO}^\varepsilon(\rho)$, let $\log_2 M=C_{IO}^\varepsilon(\rho)$, then there exists an operation $\Lambda \in IO $  such that
  $F(\rho,\Lambda(\Psi_M))\geq 1-\varepsilon$ and we let $\rho'=\Lambda(\Psi_M)$. Then we have

\begin{equation}\label{maxIOleftbound}
  \begin{aligned}
  C_{0}^\varepsilon(\rho)&\leq C_{0}(\rho')\\
  &= C_{0}(\Lambda(\Psi_M)) \\
  &\leq C_{0}(\Psi_M) \\
  &=\log_2 M= C_{IO}^\varepsilon(\rho).
  \end{aligned}
\end{equation}

For the other direction, assume that $\rho'$ is the state reaching minimum such that $C_{0}^\varepsilon(\rho)=C_{0}(\rho')$. Let $\log_2 M'= C_{0}(\rho')$. We will show that there exist a $\Lambda\in IO$ such that $F(\rho,\Lambda(\Psi_M'))\geq 1-\varepsilon$.

Let $C_{0}(\rho')=\max_j \log_2 T_j$ with the corresponding ensemble $\{p_j,\ket{\psi_j}\}$. Then $M'=\max_jT_j$. Without loss of generality, for a given $j$, let $\ket{\psi_j}=\sum_i a_{ji}\ket{i}$, we can assume that $|a_{ji}|^2 $ is in nonincreasing order, i.e. $|a_{j1}|^2\ge |a_{j2}|^2\ge \cdots \ge |a_{jd}|^2 $ and denote that $ \bm{\psi}_j=(|a_{j1}|^2,\cdots,|a_{jd}|^2)^T$ and $\bm{m}=(1/M',\cdots,1/M')^T$. Notice that the sequence $\frac{k}{\sum_{i=1}^{k}|a_{ji}|^2 }$ $(k=1,\cdots,T_j)$ is an increasing sequence due to the nonincreasing order of $|a_{ji}|^2 $, that is,
\begin{equation}
\begin{aligned}
\frac{1}{|a_{j0}|^2}\le \cdots\le \frac{T_j}{\sum_{i=1}^{T_j}|a_{ji}|^2 }=T_j\le M'.
\end{aligned}
\end{equation}
Consequently, we have for any $k\in \{1,\cdots,T_j\}$, $\frac{k}{M'}\le \sum_{i=1}^{k}|a_{ji}|^2$, which implies the majorization relation \[\bm{m}\prec \bm{\psi}_j.\] Thus there is a probability distribution $\{\lambda_{\pi}^j\}$ over permutations $\pi$ such that
\begin{equation}
\begin{aligned}
\bm{m}=\sum_{\pi} \lambda_{\pi}^j \bm{\psi}_j^\pi
\end{aligned}
\end{equation}
where $\bm{\psi}_j^\pi$ is the vector $\bm{\psi}_j$ with indices permuted according to $\pi: \bm{\psi}_j^\pi(i)=  |a_{j\pi(i)}|^2$\cite{Winter16,Du15}.
We construct the following Kraus operator
\begin{equation}
\begin{aligned}
K_{\pi}^j=\sum_i \sqrt{p_j\lambda_{\pi}^j} \sqrt{\frac{|a_{j\pi(i)}|^2}{1/M'}} \ket{\pi(i)}\bra{i},
\end{aligned}
\end{equation}
which is incoherent-preserving and satisfying
\begin{equation}
\begin{aligned}
\sum_{\pi,j} {K_{\pi}^j}^\dag K_{\pi}^j &=\sum_{i,\pi,j} p_j\lambda_{\pi}^j \frac{|a_{j\pi(i)}|^2}{1/M'} \ket{i}\bra{i}\\
&=\sum_{i,j} p_j \ket{i}\bra{i} \\
&=\sum_{i} \ket{i}\bra{i}=\mathbb{I}.
\end{aligned}
\end{equation}

Let $\Lambda$ be the incoherent operation with Kraus operators $\{K_{\pi}^j\}$, then we have $\Lambda(\Psi_M')=\sum_j p_j \ket{\psi_j}\bra{\psi_j} =\rho'$ and  $F(\Lambda(\Psi_M'),\rho)=F(\rho',\rho)\ge 1-\varepsilon$. Thus $C_{IO}^\varepsilon (\rho)\le C_0^\varepsilon (\rho)$, combined with the previous fact $C_0^\varepsilon (\rho) \le C_{IO}^\varepsilon (\rho)$, we conclude that $C_0^\varepsilon (\rho) = C_{IO}^\varepsilon (\rho)$.

For the $SIO$ case, first observe that $C_{IO}^\varepsilon(\rho)\leq C_{SIO}^\varepsilon(\rho)$, which gives us the lower bound $C_{0}^\varepsilon(\rho)\leq C_{SIO}^\varepsilon(\rho)$. Also notice that the Kraus operators constructed above satisfy the definition for $SIO$, therefore the other direction $C_{0}^\varepsilon(\rho)\geq C_{SIO}^\varepsilon(\rho)$ also holds.
\end{proof}

\section{Proof of Theorem 7}
We prove the Theorem by introducing Lemma \ref{5p}, which is Proposition II.1 in \cite{Brandao2010}. First we consider a family of sets $\{M_n\}_{n\in\mathbb{N}}$ with $M_n\subseteq \mathcal{D}(\mathcal{H}^{\otimes n})$ that satisfies the following Properties.
\begin{enumerate}
  \item Each $M_n$ is convex and closed.
  \item Each $M_n$ contains $\sigma^{\otimes n}$, for a full rank state $\sigma\in \mathcal{D}(\mathcal{H})$.
  \item If $\rho\in M_{n+1}$ then $\Tr_k[\rho]\in M_n$, for every $k\in\{1,\cdots,n+1\}$.
  \item If $\rho\in M_n$ and $\nu\in M_m$, then $\rho\otimes\nu\in M_{n+m}$.
  \item If $\rho\in M_n$, then $P_\pi\rho P_\pi\in M_n$ for every permutation $\pi$ of length $n$.
\end{enumerate}
Here $P_\pi$ denotes the permutation in $\mathcal{H}^{\otimes n}$.
\begin{lemma}\label{5p}
    For every family of sets $\{M_n\}_{n\in\mathbb{N}}$ satisfying Properties 1-5 and every state $\rho\in\mathcal{D}(\mathcal{H})$,
    \begin{equation}\label{steinslemma}
      \lim_{n\to\infty}\frac{1}{n}\min_{\sigma\in M_n}S(\rho^{\otimes n}||\sigma)=\lim_{\varepsilon\to 0^+}\lim_{n\to\infty}\frac{1}{n}\min_{\rho'\in B^{\varepsilon}(\rho^{\otimes n})}\min_{\sigma\in M_n}D_{\max}(\rho'||\sigma),
    \end{equation}
    where $B^{\varepsilon}(\rho)=\{\rho'|F(\rho,\rho')\geq 1-\varepsilon\}$.
\end{lemma}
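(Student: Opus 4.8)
The plan is to show that both sides of \eqref{steinslemma} equal the regularized relative entropy of resource $E_\infty:=\lim_{n\to\infty}\frac1n a_n$, where $a_n:=\min_{\sigma\in M_n}S(\rho^{\otimes n}\|\sigma)$. First I would check that this limit exists: if $\sigma_n,\sigma_m$ are optimal for $a_n,a_m$, then Property 4 gives $\sigma_n\otimes\sigma_m\in M_{n+m}$ and additivity of the relative entropy on product states yields $a_{n+m}\le a_n+a_m$, so $a_n/n$ converges to its infimum by Fekete's lemma. This makes the left-hand side well defined and equal to $E_\infty$. It then remains to sandwich the right-hand side, the regularized smooth max-relative entropy $\lim_{\varepsilon\to0^+}\lim_n\frac1n D_{\max}^{\varepsilon}(\rho^{\otimes n}\|M_n)$ with $D_{\max}^{\varepsilon}(\tau\|M_n):=\min_{\tau'\in B^{\varepsilon}(\tau)}\min_{\sigma\in M_n}D_{\max}(\tau'\|\sigma)$, between two quantities that both tend to $E_\infty$.

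For the achievability bound (``$\le$'') I would use blocking together with the quantum asymptotic equipartition property (AEP). Fix a block length $m$ and a near-optimal $\sigma_m\in M_m$ with $S(\rho^{\otimes m}\|\sigma_m)\le a_m+\gamma$. For $n=km$, Property 4 gives $\sigma_m^{\otimes k}\in M_n$, so $D_{\max}^{\varepsilon}(\rho^{\otimes n}\|M_n)\le D_{\max}^{\varepsilon}((\rho^{\otimes m})^{\otimes k}\|\sigma_m^{\otimes k})$. The AEP applied to the pair $(\rho^{\otimes m},\sigma_m)$ gives $\lim_{k\to\infty}\frac1k D_{\max}^{\varepsilon}((\rho^{\otimes m})^{\otimes k}\|\sigma_m^{\otimes k})=S(\rho^{\otimes m}\|\sigma_m)$ for every fixed $\varepsilon\in(0,1)$. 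Dividing by $m$ and letting $k\to\infty$, $m\to\infty$, $\gamma\to0$, I obtain $\limsup_n\frac1n D_{\max}^{\varepsilon}(\rho^{\otimes n}\|M_n)\le E_\infty$; since the AEP limit is independent of $\varepsilon$, the subsequent $\varepsilon\to0^+$ limit leaves this unchanged. Values of $n$ that are not multiples of $m$ are handled by padding with at most $m-1$ extra copies, which costs only a vanishing $O(1/n)$ term.

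For the converse bound (``$\ge$'') I would route through the composite hypothesis-testing relative entropy $D_H^{\mu}(\rho^{\otimes n}\|M_n)=-\log\min_Q\{\max_{\sigma\in M_n}\Tr[\sigma Q]:0\le Q\le\mathbb{I},\ \Tr[\rho^{\otimes n}Q]\ge1-\mu\}$, and invoke the generalized quantum Stein's lemma \cite{Brandao2010}, which states that $\lim_n\frac1n D_H^{\mu}(\rho^{\otimes n}\|M_n)=E_\infty$ for every $\mu\in(0,1)$; this is where Properties 1--5 enter. To connect the two quantities I would use the elementary observation that $\tau'\le\lambda\sigma$ forces $\Tr[\sigma Q]\ge\Tr[\tau'Q]/\lambda$ for every $0\le Q\le\mathbb{I}$, which gives $D_{\max}(\tau'\|\sigma)\ge D_H^{\mu}(\tau'\|\sigma)-\log\frac{1}{1-\mu}$ for each $\sigma$; minimizing over $\sigma$ and using the minimax inequality $\max_\sigma\min_Q\le\min_Q\max_\sigma$ promotes the single-$\sigma$ test to a composite one, so $D_{\max}(\tau'\|M_n)\ge D_H^{\mu}(\tau'\|M_n)-\log\frac{1}{1-\mu}$ with $D_H^\mu(\,\cdot\,\|M_n)$ the composite quantity above. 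Finally, translating $F(\rho^{\otimes n},\tau')\ge1-\varepsilon$ into $\tfrac12\|\rho^{\otimes n}-\tau'\|_1\le\sqrt{\varepsilon}$ lets me replace $\tau'$ by $\rho^{\otimes n}$ at the cost of loosening the testing error, $D_H^{\mu}(\tau'\|M_n)\ge D_H^{\mu-2\sqrt{\varepsilon}}(\rho^{\otimes n}\|M_n)$. Dividing by $n$, taking $n\to\infty$, and applying Stein's lemma yields $\liminf_n\frac1n D_{\max}^{\varepsilon}(\rho^{\otimes n}\|M_n)\ge E_\infty$ for every admissible $\varepsilon$, hence also after $\varepsilon\to0^+$.

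I expect the genuinely hard part to be the converse, and specifically the generalized Stein's lemma itself: proving that the composite hypothesis-testing exponent against the sets $M_n$ collapses to the regularized relative entropy, uniformly in $\mu$, is the deep analytic input that forces Properties 1--5 into play. The remaining difficulties are bookkeeping: tracking the interconversion between the fidelity smoothing parameter $\varepsilon$ and the hypothesis-testing error $\mu$, and respecting the order of limits $\lim_{\varepsilon\to0^+}\lim_{n\to\infty}$ so that every correction term vanishes in the stated order.
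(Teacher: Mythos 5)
Your proposal is correct in outline, but it is worth being explicit that the paper does not actually prove this lemma: it imports it verbatim as Proposition II.1 of \cite{Brandao2010} and uses it as a black box in the proof of Theorem 7. What you have written is essentially a faithful reconstruction of the structure of the proof in that reference. The individual steps check out: Fekete's lemma via Property 4 and additivity of $S(\cdot\|\cdot)$ on product states makes the left-hand side well defined; the achievability direction via blocking, Property 4, and the asymptotic equipartition property for $D_{\max}^{\varepsilon}(\tau^{\otimes k}\|\sigma^{\otimes k})$ is sound (Property 2 guarantees $a_m<\infty$, hence the support condition needed for the AEP, and the padding cost $(m-1)D_{\max}(\rho\|\sigma)$ is finite and vanishes after dividing by $n$); and in the converse the inequalities do run in the right direction --- $\tau'\leq\lambda\sigma^{*}$ gives $\max_{\sigma\in M_n}\Tr[\sigma Q]\geq(1-\mu)/\lambda$ for any feasible $Q$, and a test that is feasible for $\rho^{\otimes n}$ at level $\mu-2\sqrt{\varepsilon}$ is feasible for $\tau'$ at level $\mu$ by Fuchs--van de Graaf, so $D_{\max}(\tau'\|M_n)\geq D_{H}^{\mu-2\sqrt{\varepsilon}}(\rho^{\otimes n}\|M_n)-\log\frac{1}{1-\mu}$ as claimed (your constant $2\sqrt{\varepsilon}$ is slightly loose but harmless). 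The one thing to flag is that your argument is not self-contained where it matters most: the converse rests entirely on the generalized quantum Stein's lemma for the composite sets $M_n$, in its strong-converse form ($\lim_n\frac{1}{n}D_H^{\mu}(\rho^{\otimes n}\|M_n)=E_\infty$ for \emph{every} $\mu\in(0,1)$), and this is precisely the main theorem of \cite{Brandao2010} --- the same source the paper cites for the whole lemma. You acknowledge this honestly, so the proposal should be read as a correct reduction of Lemma 2 to the generalized Stein's lemma plus standard smoothing/AEP bookkeeping, rather than an independent proof; relative to the paper, which offers no argument at all, this is a genuine gain in transparency about where Properties 1--5 actually enter.
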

Now we prove Theorem 7 by applying Lemma \ref{5p}.
\begin{proof}
Let $\mathcal{I}$ denote the set of incoherent states in $\mathcal{H}$. Let $M_n=\mathcal{I}^{\otimes n}$ which indeed satisfies Properties 1-5. By rewriting Eq. \eqref{steinslemma} we obtain
  \begin{equation}\label{stein}
    \lim_{\varepsilon\to 0^+}\lim_{n\to\infty}\frac{1}{n}C_{\max}^\varepsilon(\rho^{\otimes n})=\lim_{n\to\infty}\frac{1}{n}C_r(\rho^{\otimes n}).
  \end{equation}
  From \cite{Winter16} (Theorem 9) we know that $C_r(\rho)$ is additive, so $C_r(\rho^{\otimes n})=nC_r(\rho)$. Finally, using Theorem 4 we conclude that
    \begin{equation}\label{stein}
    C_{MIO}^{\infty}(\rho)=\lim_{\varepsilon\to 0^+}\lim_{n\to\infty}\frac{1}{n}C_{\max}^\varepsilon(\rho^{\otimes n})=C_r(\rho).
  \end{equation}
\end{proof}

%
%

\section{Proof of Theorem 8}
By Theorem 6, we only need to prove the $IO$ case. We combine Lemma \ref{geq} and Lemma \ref{leq} to reach the result $C_{IO}^\infty(\rho)=C_f(\rho)$. The proof of Lemma \ref{geq} is similar to Lemma 5 in \cite{Buscemi11}.
\begin{lemma}\label{geq}
\begin{equation}\label{lemmainf}
  C_{IO}^\infty(\rho)\geq C_f(\rho).
\end{equation}
\end{lemma}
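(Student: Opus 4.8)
The plan is to prove the converse bound operationally, leaning on the fact that the coherence of formation $C_f$ is itself a well-behaved IO monotone. I will use three facts about $C_f$. First, it is convex and monotone under IO in the averaged sense, so that for any trace-preserving $\Lambda\in IO$ with Kraus operators $\{K_n\}$ we have $C_f(\Lambda(\omega))\leq C_f(\omega)$: writing $\Lambda(\omega)=\sum_n p_n\omega_n$ with $\omega_n=K_n\omega K_n^\dagger/p_n$ and $p_n=\Tr[K_n\omega K_n^\dagger]$, convexity gives $C_f(\Lambda(\omega))\leq\sum_n p_n C_f(\omega_n)$, and the averaged IO-monotonicity (C2) gives $\sum_n p_n C_f(\omega_n)\leq C_f(\omega)$. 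Second, $C_f$ is additive, $C_f(\rho^{\otimes n})=nC_f(\rho)$ (Theorem 9 of \cite{Winter16}). Third, it evaluates to $C_f(\Psi_M)=S(\Delta(\Psi_M))=\log_2 M$ on the maximally coherent state, since $\Psi_M$ is pure and $\Delta(\Psi_M)=\mathbb{I}_M/M$.

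First I would fix $\varepsilon>0$ and $n$, and let $M_n$ attain the minimum in the definition of $C_{IO}^\varepsilon(\rho^{\otimes n})$, so that there is $\Lambda_n\in IO$ with $\rho_n':=\Lambda_n(\Psi_{M_n})$ satisfying $F(\rho^{\otimes n},\rho_n')\geq 1-\varepsilon$ and $\log_2 M_n=C_{IO}^\varepsilon(\rho^{\otimes n})$. Applying the monotonicity fact and then the evaluation on $\Psi_{M_n}$,
\[C_{IO}^\varepsilon(\rho^{\otimes n})=\log_2 M_n=C_f(\Psi_{M_n})\geq C_f(\rho_n').\]
This already bounds the one-shot cost from below by the coherence of formation of the approximate output $\rho_n'$; the remaining task is to replace $\rho_n'$ by the exact target $\rho^{\otimes n}$.

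That replacement is where the work lies, and it is the step I expect to be the main obstacle. By Fuchs--van de Graaf, $F(\rho^{\otimes n},\rho_n')\geq 1-\varepsilon$ yields $\tfrac12\|\rho^{\otimes n}-\rho_n'\|_1\leq\sqrt\varepsilon$, so the two states are close in trace distance uniformly in $n$. I would then invoke the asymptotic continuity of $C_f$, which holds because $C_f$ is the convex roof of the bounded, continuous pure-state function $\ket{\psi}\mapsto S(\Delta(\ket{\psi}\bra{\psi}))$, to obtain a Fannes-type estimate $|C_f(\rho^{\otimes n})-C_f(\rho_n')|\leq c\sqrt\varepsilon\,n\log_2 d+\eta(\sqrt\varepsilon)$, where $d$ is the single-copy dimension, $c$ is a universal constant, and $\eta(x)\to 0$ as $x\to 0$. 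The essential point, mirroring Lemma 5 of \cite{Buscemi11}, is that the error grows only linearly in $n$ with a coefficient that vanishes as $\varepsilon\to 0$; establishing this asymptotic-continuity bound for the convex-roof coherence measure is the crux of the argument.

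Finally I would assemble the pieces. Combining the estimate with additivity gives $C_f(\rho_n')\geq C_f(\rho^{\otimes n})-c\sqrt\varepsilon\,n\log_2 d-\eta(\sqrt\varepsilon)=nC_f(\rho)-c\sqrt\varepsilon\,n\log_2 d-\eta(\sqrt\varepsilon)$. Dividing by $n$ and chaining with the lower bound above,
\[\frac1n C_{IO}^\varepsilon(\rho^{\otimes n})\geq C_f(\rho)-c\sqrt\varepsilon\,\log_2 d-\frac{\eta(\sqrt\varepsilon)}{n}.\]
Taking $n\to\infty$ annihilates the last term, and then $\varepsilon\to 0^+$ annihilates the middle term, yielding $C_{IO}^\infty(\rho)\geq C_f(\rho)$, which is the claim.
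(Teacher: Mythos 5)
Your proof is correct, but it takes a genuinely different route from the paper's. The paper first invokes Theorem 6 to write $C_{IO}^\varepsilon(\rho^{\otimes n})=C_0^\varepsilon(\rho^{\otimes n})=C_0(\rho')$ for the optimal $\rho'$ in the $\varepsilon$-ball, then uses the structural inequality $C_0\geq C_f$ between the two convex-roof quantities, and finally applies asymptotic continuity and additivity of $C_f$. You bypass Theorem 6 and the comparison $C_0\geq C_f$ entirely: you exploit the fact that $C_f$ is itself a convex IO monotone with $C_f(\Psi_M)=\log_2 M$, so that $\log_2 M_n\geq C_f(\Lambda_n(\Psi_{M_n}))$ directly for the optimal protocol. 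This is the standard converse template for dilution and is more self-contained (it does not rely on the one-shot achievability machinery), whereas the paper's route recycles results it has already established and makes the role of $C_0$ as the exact one-shot quantity explicit. Both arguments hinge on the same nontrivial ingredient, the asymptotic continuity of $C_f$ with error scaling as $\varepsilon\, n\log_2 d$. One caution: your justification of that step --- that $C_f$ is the convex roof of a bounded continuous pure-state function --- is not by itself sufficient, since mere continuity of the roof does not yield the required $\log_2 d$ (rather than $d$) dependence in the error; the clean way to get it, which is what the paper does, is to use the representation $C_f(\rho)=\min_{M_E}H(A|E)$ and apply the Fannes--Audenaert/Alicki--Fannes bound to the conditional entropy. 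Since you correctly identify this as the crux and the needed estimate is a known result (Winter and Yang), this is a presentational weakness rather than a gap.
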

\begin{proof}
  $\forall \varepsilon>0$ and $n\in\mathbb{N}$ we have
  \begin{equation}\label{prooflemma}
  \begin{aligned}
    \frac{1}{n}C_{IO}^{\varepsilon}(\rho^{\otimes n})&=\frac{1}{n}C_0^\varepsilon(\rho^{\otimes n})\\
    &=\frac{1}{n}C_0(\rho')\\
    &\geq \frac{1}{n}C_f(\rho')\\
    &\geq \frac{1}{n}\min_{M_E}H(A|E)-O(\varepsilon)-O(1/n),
  \end{aligned}
  \end{equation}
  where the first line follows from Theorem 6, the second line follows by choosing the optimal state $\rho'$ in the neighborhood of $\rho^{\otimes n}$, the third line follows from the fact that $C_0(\rho)\geq C_f(\rho)$ which is shown in Ref.~\cite{liu2018quantum} and the last line follows from Fannes-Audenaert inequality \cite{fannesinequality}.

  Notice that $\min_{M_E}H(A|E)$ corresponds to the coherence of formation \cite{interplay},
  \begin{equation}\label{coherenceformation}
    \min_{M_E}H(A|E)=\min_{\{p_i,\ket{\psi_i}\}}\sum_i p_i S(\Delta(\ket{\psi_i}\bra{\psi_i}))=C_f(\rho^{\otimes n}),
  \end{equation}
  and that $C_f(\rho)$ is additive \cite{Winter16}, taking the limit $n\to+\infty$ and $\varepsilon\to 0^+$ we obtain Eq.~\eqref{lemmainf}.
\end{proof}

\begin{lemma}\label{leq}
  \begin{equation}\label{lemmainf2}
  C_{IO}^\infty(\rho)\leq C_f(\rho).
\end{equation}
\end{lemma}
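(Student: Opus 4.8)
The plan is to prove this achievability bound by exhibiting an explicit asymptotic dilution protocol built from the optimal convex-roof ensemble of $\rho$ together with a typical-subspace truncation. By Theorem 6 we may replace $C_{IO}^\varepsilon$ by $C_0^\varepsilon$, so it suffices to show that for every $\delta>0$ one has $\frac1n C_0^\varepsilon(\rho^{\otimes n})\le C_f(\rho)+\delta$ for all sufficiently large $n$. First I would fix a pure-state ensemble $\{p_j,\ket{\psi_j}\}$ attaining the coherence of formation, i.e.\ $\sum_j p_j S(\Delta(\ket{\psi_j}\bra{\psi_j}))=C_f(\rho)$, and write $\ket{\psi_j}=\sum_i a_{ij}\ket{i}$ so that $S(\Delta(\ket{\psi_j}\bra{\psi_j}))$ equals the Shannon entropy $H(\mathbf q^{(j)})$ of the distribution $q^{(j)}_i=|a_{ij}|^2$. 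Taking the induced tensor decomposition gives $\rho^{\otimes n}=\sum_{\mathbf j}p_{\mathbf j}\ket{\psi_{\mathbf j}}\bra{\psi_{\mathbf j}}$ with $\ket{\psi_{\mathbf j}}=\bigotimes_k\ket{\psi_{j_k}}$ and $p_{\mathbf j}=\prod_k p_{j_k}$.

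The central step is a conditional typicality argument. Consider the joint single-letter distribution $r(j,i)=p_j|a_{ij}|^2$ over pairs, whose conditional entropy is $H(I\mid J)=\sum_j p_j H(\mathbf q^{(j)})=C_f(\rho)$. For each sequence $\mathbf j$ I would let $\Pi_{\mathbf j}$ be the projector onto the span of those computational-basis vectors $\ket{\mathbf i}$ for which $(\mathbf j,\mathbf i)$ is jointly (strongly) typical. Standard properties of strong typicality give, uniformly in $\mathbf j$, the rank bound $\mathrm{rank}(\Pi_{\mathbf j})\le 2^{n(C_f(\rho)+\delta)}$, together with a high average overlap $\sum_{\mathbf j}p_{\mathbf j}\,\|\Pi_{\mathbf j}\ket{\psi_{\mathbf j}}\|^2\ge 1-\eta_n$ with $\eta_n\to 0$ (the latter being exactly the probability that the i.i.d.\ pair sequence is jointly typical). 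Defining the truncated vectors $\ket{\tilde\psi_{\mathbf j}}=\Pi_{\mathbf j}\ket{\psi_{\mathbf j}}/\|\Pi_{\mathbf j}\ket{\psi_{\mathbf j}}\|$ (and discarding the negligible-weight sequences where this vanishes), each $\ket{\tilde\psi_{\mathbf j}}$ has at most $2^{n(C_f(\rho)+\delta)}$ nonzero computational-basis amplitudes.

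From these ingredients I would build the approximating state $\rho'=\tfrac1{\mathcal N}\sum_{\mathbf j}p_{\mathbf j}\ket{\tilde\psi_{\mathbf j}}\bra{\tilde\psi_{\mathbf j}}$ with normalization $\mathcal N\to 1$. Since $\{p_{\mathbf j}/\mathcal N,\ket{\tilde\psi_{\mathbf j}}\}$ is a valid pure-state decomposition of $\rho'$ in which every $T_{\mathbf j}\le 2^{n(C_f(\rho)+\delta)}$, the definition of $C_0$ as a minimum over decompositions yields $C_0(\rho')\le n(C_f(\rho)+\delta)$. To control the error I would invoke strong concavity of the fidelity: because $F(\ket{\psi_{\mathbf j}},\ket{\tilde\psi_{\mathbf j}})=\|\Pi_{\mathbf j}\ket{\psi_{\mathbf j}}\|^2$ and the weights agree, $F(\rho^{\otimes n},\rho')\ge 1-O(\eta_n)\to 1$. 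Hence for every fixed $\varepsilon>0$ and all large $n$ we have $\rho'\in B^\varepsilon(\rho^{\otimes n})$, so $C_0^\varepsilon(\rho^{\otimes n})\le C_0(\rho')\le n(C_f(\rho)+\delta)$; dividing by $n$ and letting $n\to\infty$, then $\delta\to 0$ and $\varepsilon\to 0^+$, gives $C_{IO}^\infty(\rho)\le C_f(\rho)$.

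The step I expect to be the main obstacle is establishing the \emph{uniform} rank bound on the conditionally typical projectors while simultaneously keeping the \emph{average} overlap close to one. The quantity $C_0$ is governed by the \emph{worst} term $\max_{\mathbf j}\log_2 T_{\mathbf j}$ of the decomposition rather than by an average, so a single atypical sequence with oversized support would spoil the estimate unless the truncation is applied uniformly across all $\mathbf j$. Carefully matching the conditional-typicality size estimate to the convex-roof entropy $C_f(\rho)=H(I\mid J)$, and transferring the per-element overlap bounds to a fidelity bound on the mixtures via strong concavity, is where the technical care is required.
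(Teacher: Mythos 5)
Your proposal is correct and follows essentially the same route as the paper's proof: both truncate the tensor-power convex-roof decomposition $\rho^{\otimes n}=\sum_{\vec j}p_{\vec j}\ket{\psi_{\vec j}}\bra{\psi_{\vec j}}$ onto typical subspaces of size $2^{n(C_f(\rho)+\delta)}$ and then read off $C_0$ of the resulting nearby state from that decomposition. The only difference is bookkeeping---you use joint typicality of the pair sequence $(\mathbf j,\mathbf i)$ with the identification $C_f(\rho)=H(I\mid J)$, whereas the paper runs typicality in two stages (first on $\vec j$, then conditionally within each block $\ket{\psi_j}^{\otimes N_j(\vec j)}$)---and your worry about the uniform rank bound is handled exactly as you suggest, since joint typicality forces $\Pi_{\mathbf j}=0$ for atypical $\mathbf j$.
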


\begin{proof}
The proof here is similar to the one given by \cite{Winter16} although from a different view of $C_0^\varepsilon$. The key idea is to construct a state $\rho''$ that is $\varepsilon$ close to $\rho^{\otimes n}$ and calculate the one-shot coherence measure $C_0$ of $\rho''$.

Given a decomposition of $\rho = \sum_j p_j\ket{\psi_j}\bra{\psi_j}$,  we construct a state that is $\epsilon$-close to $\rho^{\otimes n}=\sum_{\vec{j}=j_1j_2\cdots j_n}p_{\vec{j}}\ket{\psi_{\vec{j}}}\bra{\psi_{\vec{j}}}$. First, we consider the set of typical sequence,
\begin{equation}
	\mathcal{T}^1=\{\vec{j}:|N_j(\vec{j})/n-p_j|\le \delta_1, \forall j \},
\end{equation}
where $N_j(\vec{j})$ equals the number of $j_t$ in $\vec{j}$ such that $j_t=j, \forall t\in\{1,2,\cdots,n\}$. Then, we can define
\begin{equation}
	\rho' = \sum_{\vec{j}\in \mathcal{T}^1}p_{\vec{j}}\ket{\psi_{\vec{j}}}\bra{\psi_{\vec{j}}},
\end{equation}
such that $\exists N_1, F(\rho',\rho^{\otimes n})\ge 1-\varepsilon_1,\forall \varepsilon_1, n\ge N_1$.

Next, we consider how to approximate $\ket{\psi_{\vec{j}}}\bra{\psi_{\vec{j}}}$. Note that, $\ket{\psi_{\vec{j}}}\bra{\psi_{\vec{j}}}$ consists of $N_j(\vec{j})\in[n(p_j-\delta_1), n(p_j+\delta_1)]$ copies of $\ket{\psi_j}$. Then, to approximate $\ket{\psi_{\vec{j}}}\bra{\psi_{\vec{j}}}$, we can equivalently to consider how to approximate $\ket{\psi_j}^{\otimes N_j(\vec{j})}$. For a general state $\ket{\psi} = \sum_i \lambda_i \ket{i}$ and $\ket{\psi}^{\otimes n} = \sum_{\vec{i}=i_1i_2\cdots i_n}\lambda_{\vec{i}}\ket{{\vec{i}}}$, we can similarly consider the typical sequence,
\begin{equation}
	\mathcal{T}^2_{\ket{\psi}}=\{\vec{i}:|N_i(\vec{i})/n-|\lambda_i|^2|\le \delta_2, \forall i \}.
\end{equation}
Here $N_i(\vec{i})$ is the number of $i_t$ in $\vec{i}$ such that $i_t=i, \forall t\in\{1,2,\cdots,n\}$. Then, we can approximate $\ket{\psi}^{\otimes n}$ via,
\begin{equation}
	\ket{\psi'}=\sum_{\vec{i}\in \mathcal{T}^2_{\ket{\psi}}}\lambda_{\vec{i}}\ket{{\vec{i}}},
\end{equation}
such that $\exists N_2, F(\ket{\psi'},\ket{\psi}^{\otimes n})\ge 1-\varepsilon_2,\forall \varepsilon_2, n\ge N_2$.

Note that
\begin{equation}
	\ket{\psi_{\vec{j}}} = \pi\left(\prod_j\ket{\psi_j}^{\otimes N_j(\vec{j})}\right),
\end{equation}
where $\pi$ is a permutation between the $n$ copies of the states. Then, we can approximate $\ket{\psi_{\vec{j}}}$ by
\begin{equation}
	\ket{\psi_{\vec{j}}'} = \pi\left(\prod_j\ket{\psi_j'}\right),
\end{equation}
where $\ket{\psi_j'}$ is constructed from the above approximation procedure from $\ket{\psi_j}^{\otimes N_j(\vec{j})}$. We choose a sufficiently large $n$ such that for any $j$, $N_j(\vec{j})> N_2$, thus $F(\ket{\psi_j'},\ket{\psi_j}^{\otimes N_j(\vec{j}) })\ge 1-\varepsilon_2$, and  $F(\ket{\psi_{\vec{j}}'}, \ket{\psi_{\vec{j}}})\ge (1-\varepsilon_2)^{\Omega}$. Here $\Omega$ equals the number of terms in the decomposition of $\rho=\sum_j p_j\ket{\psi_j}\bra{\psi_j}$.

With the above result, we can further approximate $\rho'$ by
\begin{equation}
	\rho'' = \sum_{\vec{j}\in \mathcal{T}}p_{\vec{j}}\ket{\psi_{\vec{j}}'}\bra{\psi_{\vec{j}}'},
\end{equation}
with $F(\rho'',\rho')\ge (1-\varepsilon_2)^\Omega$ by the joint concavity of the fidelity. Therefore, we have $F(\rho'',\rho)\ge (1-\varepsilon_1)(1-\varepsilon_2)^\Omega$ and we can take $\varepsilon = 1-(1-\varepsilon_1)(1-\varepsilon_2)^\Omega$.

Now, we calculate an upper bound to $C_0(\rho'')$ by considering the given decomposition.  To do so, we only need to count the maximal number of nonzero coefficients in $\ket{\psi_{\vec{j}}'}$ or equivalently in $\prod_j\ket{\psi_j'}$. By the law of large numbers, $|\mathcal{T}^2_{\ket{\psi_j}}|$, also the number of nonzero coefficients in $\ket{\psi_j'}$ is upper bounded by $2^{N_j(\vec{j})(H(|\lambda_i|)|+\delta_2)}$. The total number of nonzero coefficients in $\ket{\psi_{\vec{j}}'}$ is upper bounded by
\begin{equation}
\begin{aligned}
	T_j(\ket{\psi_{\vec{j}}'}) &= 2^{\sum_j N_j(\vec{j})(H(|\lambda_i|)|+\delta_2)}\\
	&\le 2^{\sum_j n(p_j+\delta_1)(H(|\lambda_i|)|+\delta_2)}.\\
\end{aligned}
\end{equation}
As $\rho''$ is only a special state that $F(\rho'',\rho)\ge 1-\varepsilon$, we have
\begin{equation}
	C_0^{\varepsilon}(\rho^{\otimes n})\le \sum_j n(p_j+\delta_1)(S(\rho_j^{\textrm{diag}})|+\delta_2),
\end{equation}
with $\rho_j^{\textrm{diag}}= \sum_i\bra{i}\ket{\psi_j}\bra{\psi}\ket{i}\ket{i}\bra{i}$.
Considering all decompositions of $\rho$, we thus have
\begin{equation}
	C_0^{\varepsilon}(\rho^{\otimes n})\le \min_{p_j,\ket{\psi_j}}\sum_j n(p_j+\delta_1)(S(\rho_j^{\textrm{diag}})|+\delta_2),
\end{equation}
Take the limit of $\varepsilon\to 0^+$ and $n\to\infty$, we have
   \begin{equation}
   \begin{aligned}
     C_{IO}^\infty(\rho)&=   \lim_{\varepsilon\to 0^+}\lim_{n\to\infty}\frac{1}{n}C_0^\varepsilon(\rho^{\otimes n})\leq C_f(\rho).
     \end{aligned}
   \end{equation}
\end{proof}

Besides the above proof, Ref.\cite{Buscemi11} provides another possible proof for this lemma by applying Lemma 7 and Lemma 8 in \cite{Buscemi11}.

\section{Additivity}\label{additi}
In this section, we discuss the additivity of the introduced coherence monotones which characterizes the one-shot coherence cost. We provide detailed proofs for $C_0^{\varepsilon}(\rho)=C_{IO}^{\varepsilon}(\rho)$ as an example and the other ones follow similarly.

First of all, we show that $C_0^\varepsilon(\rho_1\otimes\rho_2)\neq C_0^\varepsilon(\rho_1)+C_0^\varepsilon(\rho_2)$ in general. Consider the example of $\rho_1=\rho^{\otimes n-1}$ and $\rho_2=\rho$. Taking the limit of $n\to\infty$ and $\varepsilon\to 0^+$, we obtain
\begin{equation}\label{addlimit1}
  C_0^\varepsilon(\rho^{\otimes n})-C_0^\varepsilon(\rho^{\otimes n-1})\to C_f(\rho),
\end{equation}
and
\begin{equation}\label{addlimit2}
  C_0^\varepsilon(\rho)\to C_0(\rho).
\end{equation}
Since $C_f(\rho)\leq C_0(\rho)$ and for some $\rho$ the inequality is strict, we conclude that $C_0^\varepsilon(\rho)$ is not additive. However, we can provide an upper bound for $C_0^\varepsilon(\rho_1\otimes\rho_2)$.

Next we prove that if $\varepsilon\geq\varepsilon_1+\varepsilon_2$, we have
\begin{equation}\label{addbound}
  C_0^{\varepsilon}(\rho_1\otimes\rho_2)\leq C_0^{\varepsilon_1}(\rho_1)+C_0^{\varepsilon_2}(\rho_2).
\end{equation}
Since $C_0^{\varepsilon}(\rho)=C_{IO}^{\varepsilon}(\rho)$, we prove this inequality from the operational perspective. Suppose we want to prepare the state $\rho_1\otimes\rho_2$ with error $\varepsilon$ via IO. If we prepare it directly, the minimum resource required is $ C_0^{\varepsilon}(\rho_1\otimes\rho_2)$. We consider another preparation method, which is to prepare $\rho_1$ with error $\varepsilon_1$ and $\rho_2$ with error $\varepsilon_2$, respectively. Denote the two result states as $\rho_1'$ and $\rho_2'$, then we regard $\rho_1'\otimes\rho_2'$ as the final approximation of $\rho_1\otimes\rho_2$. This is legal because
\begin{equation}\label{addlegal}
\begin{aligned}
  F(\rho_1\otimes\rho_2,\rho_1'\otimes\rho_2')&=F(\rho_1,\rho_1')F(\rho_2,\rho_2')\\
  &\geq (1-\varepsilon_1)(1-\varepsilon_2)\\
  &\geq 1-(\varepsilon_1+\varepsilon_2)\\
  &\geq 1-\varepsilon.
\end{aligned}
\end{equation}
The resource required for the new preparation method is $C_0^{\varepsilon_1}(\rho_1)+C_0^{\varepsilon_2}(\rho_2)$. By definition of one-shot coherence cost, we obtain $C_0^{\varepsilon}(\rho_1\otimes\rho_2)\leq C_0^{\varepsilon_1}(\rho_1)+C_0^{\varepsilon_2}(\rho_2)$.

\end{document}